\newtheorem{Theorem}{Theorem}
\newtheorem{Lemma}{Lemma}
\newcommand{\argmax}{\operatornamewithlimits{arg\,max}}
\newcommand{\argmin}{\operatornamewithlimits{arg\,min}}
\begin{document}

\title{Dynamic Mobile Edge Caching with \\Location Differentiation}  
\author{\IEEEauthorblockN{Peng Yang\IEEEauthorrefmark{1},
Ning Zhang\IEEEauthorrefmark{2},
Shan Zhang\IEEEauthorrefmark{2}, 
Li Yu\IEEEauthorrefmark{1},
Junshan Zhang\IEEEauthorrefmark{3}, and
Xuemin (Sherman) Shen\IEEEauthorrefmark{2}}
\IEEEauthorblockA{\IEEEauthorrefmark{1}School of Electronic Information and Communications, Huazhong University of Science and Technology, Wuhan, China}
\IEEEauthorblockA{\IEEEauthorrefmark{2}Department of Electrical and Computer Engineering, University of Waterloo, Waterloo, Canada}
\IEEEauthorblockA{\IEEEauthorrefmark{3}School of Electrical, Computer and Energy Engineering, Arizona State University, Tempe, Arizona, USA \\Email: \IEEEauthorrefmark{1}\{yangpeng, hustlyu\}@hust.edu.cn, \IEEEauthorrefmark{2}\{n35zhang, s372zhan, sshen\}@uwaterloo.ca, \IEEEauthorrefmark{3}junshan.zhang@asu.edu}}

\maketitle

\begin{abstract}
Mobile edge caching enables content delivery directly within the radio access network, which  effectively alleviates the backhaul burden and reduces round-trip latency. To fully exploit the edge resources, the most popular contents should be identified and cached. Observing that content popularity varies greatly at different locations, to maximize local hit rate, this paper proposes an online learning algorithm that dynamically predicts content hit rate, and makes location-differentiated caching decisions. Specifically, a linear model is used to estimate the future hit rate. Considering the variations in user demand, a perturbation is added to the estimation to account for uncertainty. The proposed learning algorithm requires no training phase, and hence is adaptive to the time-varying content popularity profile. Theoretical analysis indicates that the proposed algorithm asymptotically approaches the optimal policy in the long term. Extensive simulations based on real world traces show that, the proposed algorithm achieves higher hit rate and better adaptiveness to content popularity fluctuation, compared with other schemes.
\end{abstract}

\section{Introduction}
The soaring mobile traffic has put high pressure on the paradigm of Cloud-based service provisioning, because moving a large volume of data into and out of the Cloud wirelessly consumes substantial spectrum resources, and meanwhile may incur large latency. \emph{Mobile Edge Computing} (MEC) emerges as a new paradigm to alleviate the capacity concern of mobile access networks \cite{ben17}. Residing on the network edge, MEC makes storage and computing resources available to mobile users through one-hop wireless connections, facilitating a number of mobile services, such as local content caching, augmented reality, and cognitive assistance \cite{ETSI}.

Among other services, content caching on the edge gains increased attention \cite{fran}-\cite{icc16}. In particular, with the prevalence of social media, contents, such as high-resolution videos, are spreading among mobile users in viral fashion, putting tremendous pressure on the network backhaul \cite{tsp,tit13}. It is forecast that, caching contents on network edge can reduce up to $35\%$ traffic demand on the backhaul \cite{ETSI}. Unfortunately, compared with the ever-increasing content volume, the storage on the edge node (EN) is always limited. It is impossible to cache all the contents locally. Hence, it becomes crucial to identify the optimal set of contents that maximizes the cache utilization.

Content popularity is an effective measure for making caching decisions, based on which influential contents can be identified and cached proactively \cite{jian}. Yet, content popularity is unknown \emph{a priori}, it is hard to directly select and cache the most popular ones. 
Blasco \emph{et al.} developed an online algorithm to learn the content popularity profile \cite{icc14}. Specifically, it predicts future content hit rate based on the number of instantaneous requests of cached contents, and then makes new caching decisions correspondingly. In practice, however, the popularity profile of content is not only unknown, but also varying since user's interests are constantly changing \cite{acmcom}, and meanwhile new contents are being created. To maximize cache utilization under the condition of varying and unknown popularity profile, M\"uller \emph{et al.} proposed a context-aware caching algorithm based on user information, which includes users' ages, genders or their preferences \cite{icc16}. 
However, relying on user information for context differentiation is risky since such information is extremely sensitive and often unavailable.
Alternatively, exploiting location features for context differentiation is a feasible approach. Generally, locations can be classified according to their social functions, such as residential area and business district. Users in different areas have diverse interests \cite{acmcom}. Hence, locational statistics, including the number of users and regional content preferences, can be used for context differentiation. Based on which fine-grained caching decisions can be made to improve the cache utilization.

In this paper, we investigate the problem of mobile edge caching with location differentiation. Since the ability of identifying popular contents is crucial, this problem is challenging in the following ways. Firstly, the future hit rate of a content at a certain location is unknown ahead. Secondly, though the popularity diversity among different locations is evident, there is no established model of how location features affect content hit rate. Thirdly, content hit rate is varying continuously, so the caching strategy should dynamically adjust to the changes. To address those issues, we propose a novel learning algorithm that estimates future content hit rate based on a linear prediction model. This model incorporates content feature and location characteristics, with well-balanced stability and estimation accuracy. Considering the impact of random noise, a subtly designed perturbation is added to the prediction of the linear model to account for uncertainty. Theoretical analysis indicates that the proposed algorithm achieves sublinear regret, i.e., it asymptotically approaches the optimal strategy in the long term. Extensive simulations based on real world traces show that the proposed caching algorithm achieves better accuracy on hit rate prediction, and meanwhile adapts steadily to the popularity dynamics.

The remainder of the paper is organized as follows. Section \ref{systemmodel} describes the system model and the caching problem formulation. Section \ref{cachingalgorithm} presents the proposed location differentiated caching algorithm, followed by the theoretical regret analysis in Section \ref{regretanalysis}. Simulation results are shown in Section \ref{simulation} and concluding remarks are given in Section \ref{conclusion}.

\section{System Model and Problem Formulation}\label{systemmodel}
\subsection{Network Model}
\begin{figure}[t]
\centering
\includegraphics[width=0.325\textwidth]{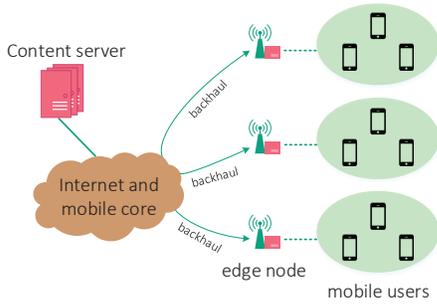}
\caption{Network model of mobile edge caching.}
\label{cachingmodel}
\end{figure}

\begin{figure}[t]
\centering
\includegraphics[width=0.35\textwidth]{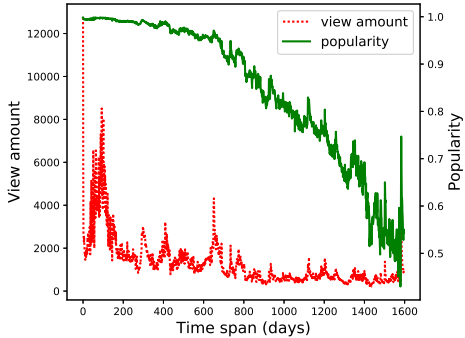}
\caption{The daily view amount and popularity curves of a YouTube video since uploaded. Note that the popularity score is calculate based on the statistics of a set of randomly crawled videos.}
\label{Youtubestatistics}
\end{figure}

Capacity-augmented base stations, WiFi access points and other devices with excess capacity can be exploited for EN deployment \cite{ben17}. In this paper, the storage resources on ENs are harnessed for content caching services. Specifically, consider the mobile edge caching model illustrated in Fig. \ref{cachingmodel}, a set of ENs $\mathcal N = \{1,2,\dotsc,N\}$ is deployed with separated backhaul links connecting to the mobile core network. Each EN $n$ is associated with a distinct location and has different characteristics in terms of content requests compared with others. Online contents are dynamically pushed to ENs so that user's content requests can be processed with reduced latency. Each EN serves a disjoint set of mobile users.  

\subsection{Content Popularity with Location Differentiation}
A simple yet effective caching strategy is pushing the most popular contents to the network edge. Hence, local content hit rate is maximized and user requests are served with reduced latency and improved quality of experience. Extensive works have been done on the popularity of contents, especially video files \cite{tsp,acmcom,imc}. According to the statistics we crawled from YouTube, as illustrated in Fig. \ref{Youtubestatistics}, the popularity profile of a video file varies in two-fold. On one hand, the daily view amount is varying. On the other hand, as other videos' daily view amounts are also varying and new videos are being uploaded, the popularity score of a video file is constantly fluctuating. Moreover, location-related characteristics also affect the content popularity. As a result, general caching strategies that based on fixed popularity profile are not optimal in practice. 

Consider a set of files $\mathcal F = \{1,2,\dotsc,F\}$ that can be cached at ENs, and let $c<F$ be the caching size of each EN. We assume that all contents are of equal size\footnote{In case contents are of different sizes, they can be split into smaller ones of equal size. For example, the widely used DASH (Dynamic Adaptive Streaming over HTTP) protocol breaks content into small segments before transmission.} and the size is normalized to 1, i.e., each EN can cache up to $c$ contents. We focus on those files' popularity dynamics in a sequence of time slots $\mathcal T = \{1,2,\dotsc,T\}$. Let $\boldsymbol{x}_{f,n,t} \in \mathbb R^d$ be a $d$-dimensional feature vector of file $f$ associated with EN $n$ observed before time slot $t$. For a certain EN $n$, the hit rate\footnote{We define \emph{hit rate} as the number of content requests rather than a ratio.} of file $f$ during time slot $t$, denoted by $d_{f,n,t}$, is statistically linear with respect to its feature vector $\boldsymbol{x}_{f,n,t}$, i.e.,
\begin{equation}\label{linearprediction}
\mathbb E[d_{f,n,t}|\boldsymbol{x}_{f,n,t}] = \boldsymbol{x}_{f,n,t}^\top \boldsymbol{\theta}_{n}^\ast,
\end{equation}
where $\boldsymbol{\theta}_{n}^\ast \in \mathbb R^d$ is the unknown true parameter vector associated with EN $n$. The vector $\boldsymbol{x}_{f,n,t}$ between file $f$ and EN $n$ may contain feature information like the frequency of file $f$ being cached at EN $n$ up to time $t$, the hit rate of $f$ at EN $n$ during the last $5$ time slots, $30$ time slots etc. The parameter vector $\boldsymbol{\theta}_{n}^\ast$ represents the specific location characteristics at EN $n$ and determines the particular combination of different features on the expected hit rate. In this way, a certain content is expected to have different hit rates at different ENs. This linear prediction model is widely used in other areas like signal processing and financial engineering \cite{feng,chu}. It provides a method to predict content hit rate, which is essential to proactive content caching.

\subsection{Problem Formulation}
As indicated by Fig. \ref{Youtubestatistics}, the popularity of a content is varying constantly. Hence, we intend to perform dynamic content caching that constantly updates the files on ENs to achieve higher long-term hit rate. To this end, contents with higher popularity at different locations should be proactively identified and cached respectively, and meanwhile the less popular ones should be evicted. Let $\mathcal F_{n,t}$ denote the set of contents cached at EN $n$ during time slot $t$, then the dynamic caching problem can be formulated as the following long-term hit rate maximization (LHRM) problem:
\begin{equation}
\begin{array}{rl}
\mbox{(\underline{LHRM}):}\max & \!\sum_{t\in\mathcal T} \sum_{n \in \mathcal N} \sum_{f \in \mathcal F_{n,t}} d_{f,n,t} \\
\textrm{Subject to:} & \!|\mathcal F_{n,t}| \leq c, \; \forall n\in\mathcal N, \; t \in \mathcal T.\label{LHRM}
\end{array}
\end{equation}
However, the popularity profile, i.e., the hit rate of contents at each EN, is unknown \emph{a priori}. Hence the decision variables $\mathcal F_{n,t}$ in above optimization problem is intractable directly. For convenience, denote the optimal caching strategy $\mathcal F_{n,t}^\ast$ for EN $n$ at time $t$. We have that
\begin{equation}
\mathcal F_{n,t}^\ast = \argmax_{|\mathcal F_{n,t}| \leq c}\sum_{f \in \mathcal F_{n,t}}d_{f,n,t},\; \forall n \in \mathcal N,\; t \in \mathcal T.
\end{equation}
Define the long-term \emph{regret} of a solution respect to the optimal caching strategy as 
\begin{equation}\label{regret}
R(T) \triangleq \mathbb E\left[ \sum_{t\in\mathcal T} \sum_{n \in \mathcal N} \left(\sum_{f \in \mathcal F_{n,t}^\ast} d_{f,n,t} - \sum_{f \in \mathcal F_{n,t}} d_{f,n,t}\right) \right].
\end{equation}
Then, the LHRM problem can be reformulate as a long-term regret minimization (LRM) problem: 
\begin{equation}
\begin{array}{rl}
\mbox{(\underline{LRM}):}\min & R(T)\\
\textrm{Subject to:} & \!|\mathcal F_{n,t}| \leq c, \; \forall n\in\mathcal N, \; t \in \mathcal T.\label{LRM}
\end{array}
\end{equation}
In the long term, if an algorithm can continuously identify the optimal set $\mathcal F_{n,t}^\ast$ and cache those files at the corresponding ENs, the algorithm achieves \emph{zero-regret}. Since $\mathcal F_{n,t}^\ast$ is unknown \emph{a priori}, our goal is to develop a caching algorithm that makes good estimation on future content hit rate, and hence better identifies the popular ones. To this end, we propose an online learning algorithm that dynamically adjusts the estimation of location parameter vectors. The estimation error is carefully bounded so that the proposed algorithm asymptotically approaches the optimal caching policy.

\section{Location Differentiated Content \\Caching Algorithm}\label{cachingalgorithm}
To better characterize different locations and make accurate prediction on future content hit rate, we resort to the linear model given by Eq. (\ref{linearprediction}). It can be interpreted that, at time slot $t$, given the feature vector $\boldsymbol{x}_{f,n,t}$, the hit rate of file $f$ at EN $n$ is predicted to be the linear combination of the features, which gives a feasible way to predict the content hit rate. However, the parameter vector for combination is unknown \emph{a priori}. Therefore, a good estimation of the true parameter vector $\boldsymbol{\theta}_n^\ast$ will lead to accurate prediction of the content hit rate. 

In this section, we first present a regression-based method to estimate the parameter vector of each EN. Then we present an online caching algorithm that predicts future content hit rate based on the continually evolving parameter vector.

\subsection{Predicting Content Hit Rate}
Consider a specific file $f$ and EN $n$, the estimation of parameter vector $\boldsymbol{\theta}_n^\ast$ can be performed in an online fashion based on file $f$'s historical data. Let $\boldsymbol{\Phi}_{f,n}\in \mathbb R^{m \times d}$ be the historical feature vectors of file $f$, where $m$ is the frequency of file $f$ being cached at EN $n$ up to time slot $t$, and the $m$-th row of $\boldsymbol{\Phi}_{f,n}$ is the corresponding feature vector $\boldsymbol{x}_{f,n,m}$. Denote $\boldsymbol{y}_{f,n}\in\mathbb R^m$ the $m$-time empirical hit rate of file $f$ at EN $n$. By applying the standard ordinary least square linear regression, i.e., $\boldsymbol{\theta}_n^\ast = {\argmin}_{\boldsymbol{\theta}_n} ||\boldsymbol{y}_{f,n} - \boldsymbol{\Phi}_{f,n}\boldsymbol{\theta}_n||$, the closed-form estimation of $\boldsymbol{\theta}_n^\ast$ can be derived as $(\boldsymbol{\Phi}_{f,n}^\top \boldsymbol{\Phi}_{f,n})^{-1}\boldsymbol{\Phi}_{f,n}^\top \boldsymbol{y}_{f,n}$. However, considering that feature vectors may be correlated, and hence the matrix $\boldsymbol{\Phi}_{f,n}^\top \boldsymbol{\Phi}_{f,n}$ could be singular, rendering the estimation of $\boldsymbol{\theta}_n^\ast$ fluctuate significantly. Instead of the unbiased estimation made by ordinary least square linear regression, ridge regression makes biased estimation by adding a control parameter that restricts the magnitude of the parameter vector, which helps to improve the estimation stability. By ridge regression, the estimation of $\boldsymbol{\theta}_n^\ast$ can be explicitly given as
\begin{equation}\label{ridge}
\tilde{\boldsymbol{\theta}}_n = (\boldsymbol{\Phi}_{f,n}^\top \boldsymbol{\Phi}_{f,n} + \lambda \boldsymbol{I}_d)^{-1} \boldsymbol{\Phi}_{f,n}^\top \boldsymbol{y}_{f,n},
\end{equation}
where $\boldsymbol{I}_d \in \mathbb R^{d \times d}$ is the identity matrix and $\lambda > 0$ is the control parameter that guarantees the stability of estimation. The accuracy of estimation depends on the amount of data and $\lambda$. For convenience, let $\boldsymbol{V}_{f,n} = \boldsymbol{\Phi}_{f,n}^\top \boldsymbol{\Phi}_{f,n} + \lambda \boldsymbol{I}_d$ for all $f \in \mathcal F$ and $n \in \mathcal N$. The following lemma, which is slightly manipulated from \cite{chu}, gives an upper bound on the estimation error of ridge regression.

\begin{Lemma}\label{estimationerror}
If $||\boldsymbol{\theta}_{n}^\ast|| \leq \zeta$ for all $n \in \mathcal N$, where $||\cdot||$ denotes the Euclidean norm. Then, $\forall \delta >0$, the estimation error of ridge regression can be upper bounded as 
\begin{equation}\label{errorbound}
|\boldsymbol{x}_{f,n}^\top \tilde{\boldsymbol{\theta}}_n - \boldsymbol{x}_{f,n}^\top \boldsymbol{\theta}_n^\ast| \leq (\delta + \zeta\lambda) \sqrt{\boldsymbol{x}_{f,n}^\top \boldsymbol{V}_{f,n}^{-1}\boldsymbol{x}_{f,n}}
\end{equation}
with probability at least $1-2e^{-2\delta^2}$.
\end{Lemma}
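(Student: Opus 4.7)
My plan is to split the one-step prediction error into a deterministic \emph{bias} induced by the $\lambda$-regularization and a stochastic \emph{noise} coming from the random hit counts, then to control each in the $\boldsymbol{V}_{f,n}^{-1}$--norm of the query vector $\boldsymbol{x}_{f,n}$. Writing $\boldsymbol{y}_{f,n}=\boldsymbol{\Phi}_{f,n}\boldsymbol{\theta}_n^{\ast}+\boldsymbol{\epsilon}$ with zero-mean entries of $\boldsymbol{\epsilon}$ by (\ref{linearprediction}), substituting into (\ref{ridge}), and using $\boldsymbol{\Phi}_{f,n}^{\top}\boldsymbol{\Phi}_{f,n}=\boldsymbol{V}_{f,n}-\lambda\boldsymbol{I}_d$, I arrive at the decomposition
\begin{equation*}
\tilde{\boldsymbol{\theta}}_n-\boldsymbol{\theta}_n^{\ast}=-\lambda\,\boldsymbol{V}_{f,n}^{-1}\boldsymbol{\theta}_n^{\ast}+\boldsymbol{V}_{f,n}^{-1}\boldsymbol{\Phi}_{f,n}^{\top}\boldsymbol{\epsilon}.
\end{equation*}
Taking inner product with $\boldsymbol{x}_{f,n}$ and applying the triangle inequality reduces the task to bounding the two resulting scalars.

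For the bias term I would invoke Cauchy--Schwarz in the $\boldsymbol{V}_{f,n}^{-1}$ inner product, giving $\lambda|\boldsymbol{x}_{f,n}^{\top}\boldsymbol{V}_{f,n}^{-1}\boldsymbol{\theta}_n^{\ast}|\leq\lambda\sqrt{\boldsymbol{x}_{f,n}^{\top}\boldsymbol{V}_{f,n}^{-1}\boldsymbol{x}_{f,n}}\cdot\sqrt{(\boldsymbol{\theta}_n^{\ast})^{\top}\boldsymbol{V}_{f,n}^{-1}\boldsymbol{\theta}_n^{\ast}}$. Since $\boldsymbol{V}_{f,n}\succeq\lambda\boldsymbol{I}_d$ and $\|\boldsymbol{\theta}_n^{\ast}\|\leq\zeta$, the second radical is at most $\zeta/\sqrt{\lambda}$, so the deterministic contribution is bounded by $\zeta\lambda\,\sqrt{\boldsymbol{x}_{f,n}^{\top}\boldsymbol{V}_{f,n}^{-1}\boldsymbol{x}_{f,n}}$ in the operating regime $\lambda\geq 1$ used by the algorithm.

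For the stochastic term, set $\boldsymbol{a}:=\boldsymbol{\Phi}_{f,n}\boldsymbol{V}_{f,n}^{-1}\boldsymbol{x}_{f,n}$ so that the quantity of interest is $\boldsymbol{a}^{\top}\boldsymbol{\epsilon}$. A short calculation again based on $\boldsymbol{\Phi}_{f,n}^{\top}\boldsymbol{\Phi}_{f,n}=\boldsymbol{V}_{f,n}-\lambda\boldsymbol{I}_d$ yields
\begin{equation*}
\|\boldsymbol{a}\|^{2}=\boldsymbol{x}_{f,n}^{\top}\boldsymbol{V}_{f,n}^{-1}\boldsymbol{x}_{f,n}-\lambda\,\boldsymbol{x}_{f,n}^{\top}\boldsymbol{V}_{f,n}^{-2}\boldsymbol{x}_{f,n}\leq\boldsymbol{x}_{f,n}^{\top}\boldsymbol{V}_{f,n}^{-1}\boldsymbol{x}_{f,n}.
\end{equation*}
Treating the residuals $\epsilon_i=d_{f,n,i}-\mathbb{E}[d_{f,n,i}\mid\boldsymbol{x}_{f,n,i}]$ as independent, zero-mean, and bounded in a unit-length interval (hit counts are normalized so that each residual lies in $[-\tfrac{1}{2},\tfrac{1}{2}]$), Hoeffding's inequality supplies $\mathbb{P}(|\boldsymbol{a}^{\top}\boldsymbol{\epsilon}|\geq\delta\|\boldsymbol{a}\|)\leq 2e^{-2\delta^{2}}$; combining with the norm estimate above gives $|\boldsymbol{x}_{f,n}^{\top}\boldsymbol{V}_{f,n}^{-1}\boldsymbol{\Phi}_{f,n}^{\top}\boldsymbol{\epsilon}|\leq\delta\sqrt{\boldsymbol{x}_{f,n}^{\top}\boldsymbol{V}_{f,n}^{-1}\boldsymbol{x}_{f,n}}$ with the stated probability.

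Adding the bias and noise bounds delivers the advertised $(\delta+\zeta\lambda)\sqrt{\boldsymbol{x}_{f,n}^{\top}\boldsymbol{V}_{f,n}^{-1}\boldsymbol{x}_{f,n}}$ on an event of probability at least $1-2e^{-2\delta^{2}}$. The delicate step is the concentration argument: because $\boldsymbol{\Phi}_{f,n}$ is assembled from caching decisions that themselves depend on past residuals, the weights $a_i$ are not strictly independent of $\epsilon_i$ and a textbook Hoeffding bound may not apply verbatim. In that case I would replace it with a self-normalized martingale tail inequality of the Abbasi-Yadkori/Pe\~na--Lai--Shao type used in \cite{chu}, which controls $\|\boldsymbol{V}_{f,n}^{-1/2}\boldsymbol{\Phi}_{f,n}^{\top}\boldsymbol{\epsilon}\|$ uniformly in time and yields exactly the same functional form of the bound; the bias computation and the algebraic identity for $\|\boldsymbol{a}\|^{2}$ remain unchanged.
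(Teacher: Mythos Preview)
Your proposal is essentially the same as the paper's proof: the identical bias/noise decomposition, the same key identity $\|\boldsymbol{\Phi}_{f,n}\boldsymbol{V}_{f,n}^{-1}\boldsymbol{x}_{f,n}\|^{2}\leq \boldsymbol{x}_{f,n}^{\top}\boldsymbol{V}_{f,n}^{-1}\boldsymbol{x}_{f,n}$, and a Hoeffding/Azuma-type concentration on $\boldsymbol{a}^{\top}\boldsymbol{\epsilon}$. The only noteworthy difference is that the paper invokes Azuma's inequality directly (which already accommodates the adapted nature of the design rows), whereas you apply Hoeffding and then correctly flag the dependence issue and the martingale fix; your bias step via Cauchy--Schwarz in the $\boldsymbol{V}_{f,n}^{-1}$ inner product in fact yields the slightly sharper constant $\zeta\sqrt{\lambda}$ before you relax it to $\zeta\lambda$ under $\lambda\geq 1$, a condition the paper's bound $\|\boldsymbol{x}_{f,n}^{\top}\boldsymbol{V}_{f,n}^{-1}\|\leq\sqrt{\boldsymbol{x}_{f,n}^{\top}\boldsymbol{V}_{f,n}^{-1}\boldsymbol{x}_{f,n}}$ also implicitly uses.
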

Please refer to Appendix for the proof. The upper bound of estimation error provided in Lemma \ref{estimationerror} can be interpreted that, the true hit rate falls into the confidence interval around the estimation with high probability. Based on Lemma \ref{estimationerror}, we propose a dynamic content hit rate prediction and caching algorithm.

\subsection{Caching Algorithm}
\begin{algorithm}[t] 
\caption{Location Differentiated Edge Caching Algorithm}
\begin{algorithmic}[1] \label{LinUCB}
\REQUIRE $\lambda >0$.
\ENSURE Set of files to be cached in each EN.
\STATE Initialization: Cache files in all ENs and get the initial feature vectors $\boldsymbol{x}_{f,n,0}$ of all file-EN pairs.
\STATE $\boldsymbol{V}_{n} \gets \lambda \boldsymbol{I}_d$, $\boldsymbol{h}_{n} \gets \boldsymbol{0}_d,\;\forall n\in\mathcal N$
\FOR {$t = 1,2,\dotsc,T$}
\FOR {each EN $n \in \mathcal N$}
\STATE $\tilde{\boldsymbol\theta}_{n,t} \gets \boldsymbol{V}_{n}^{-1} \boldsymbol{h}_{n}$
\FOR {each file $f \in \mathcal F$}
\STATE Obtain feature vectors $\boldsymbol{x}_{f,n,t}$
\STATE $\tilde d_{f,n,t} \gets \boldsymbol{x}_{f,n,t}^\top\tilde{\boldsymbol\theta}_{n,t}, \; \hat 		d_{f,n,t} \gets \tilde d_{f,n,t} + p_{f,n,t}$
\ENDFOR
\STATE $\mathcal F_{n,t} = \argmax_{\mathcal F_n \subseteq \mathcal F,\;|\mathcal F_n| \leq c} 			\sum_{f\in\mathcal F_n} \hat d_{f,n,t}$
\STATE Cache all the files in set $\mathcal F_{n,t}$ on EN $n$
\STATE Observe the empirical hit rate $d_{f,n,t}$ of cached files 
\STATE Update $\boldsymbol{V}_{n}$ and $\boldsymbol{h}_{n}$ based on $\boldsymbol{x}_{f,n,t}$ and 		$d_{f,n,t}$ of all cached files: \\
		$\boldsymbol{V}_{n} \gets \boldsymbol{V}_{n} + \boldsymbol{x}_{f,n,t}\boldsymbol{x}		_{f,n,t}^\top$ \\
		$\boldsymbol{h}_{n} \gets \boldsymbol{h}_{n} + \boldsymbol{x}_{f,n,t}d_{f,n,t}$
\ENDFOR
\ENDFOR
\end{algorithmic}
\end{algorithm}
The location differentiated EN caching algorithm is sketched in Algorithm \ref{LinUCB}. During each time slot, the algorithm first updates the estimated parameter vector $\tilde{\boldsymbol\theta}_{n,t}$. With the accumulation of historical data, $\tilde{\boldsymbol\theta}_{n,t}$ will finally converge to the true parameter vector $\boldsymbol{\theta}_{n}^\ast$. Then, based on the instantaneous content feature vector $\boldsymbol{x}_{f,n,t}$, the predicted hit rate $\tilde d_{f,n,t}$ is obtained according to the linear model. Furthermore, a perturbation term $p_{f,n,t}$ is added to the linear prediction, where
\begin{equation}\label{perturbation}
p_{f,n,t} = \alpha_t \sqrt{\boldsymbol{x}_{f,n,t}^\top \boldsymbol{V}_{f,n}^{-1} \boldsymbol{x}_{f,n,t}},
\end{equation}
and $\alpha_t = \big[\ln (tF^{\frac{1}{2}})\big]^{\frac{1}{2}} + \zeta\lambda$. The rationale of the perturbation is that Eq. (\ref{linearprediction}) only gives an expectation value of the hit rate while omitting the potential random noises. The perturbation specified by Eq. (\ref{perturbation}) is inline with Lemma \ref{estimationerror} and can be regarded as the optimism in face of uncertainty, or equivalently, the upper confidence of the predicted hit rate. With the above setting of $\alpha_t$, we have $\delta =  \big[\ln (tF^{\frac{1}{2}})\big]^{\frac{1}{2}}$ in Lemma \ref{estimationerror}. Hence, as $t$ increases, the upper confidence bound holds with high probability (at least $1-2F^{-1}t^{-2}$). Based on the upper confidence $\hat d_{f,n,t}$ of predicted hit rate, a set of file $\mathcal F_{n,t}$ that is predicted to maximize the content hit rate at EN $n$ is cached respectively. Afterwards, the empirical hit rate information of all cached files is recorded, which is used to update the database for subsequent estimation and prediction. Note that, a file may be simultaneously cached in multiple ENs.

\section{Regret Analysis}\label{regretanalysis}
The long-term hit rate of the proposed algorithm highly depends on the accuracy of prediction. This section gives a theoretical upper bound on the regret of long-term hit rate of the proposed algorithm. 

In mobile edge caching, let $c$ be the caching size of each EN, and $F$ be the size of ground file set. Suppose content hit rate satisfies the linear model, and feature vectors are bounded by  $||\boldsymbol{x}_{f,n,t}|| \leq \eta$ for all $f \in \mathcal F$, $n \in \mathcal N$ and $t\in\mathcal T$, where $||\cdot||$ denotes the Euclidean norm. We have the following theorem.

\begin{Theorem}\label{theorem}
Mobile edge caching Algorithm \ref{LinUCB} achieves sublinear long-term regret. Specifically, the long-term regret $R(T)$ is at most of order $O(cN\sqrt{dT (\ln T) \ln(\lambda + T\eta^2/d)})$. 
\end{Theorem}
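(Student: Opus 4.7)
The plan is to follow the standard optimism-in-face-of-uncertainty regret analysis used for LinUCB, extending it in two ways: (i) to the combinatorial top-$c$ selection rule executed by Algorithm~\ref{LinUCB}, and (ii) to an aggregation across the $N$ edge nodes. Throughout, I would treat each EN separately, since the histories and caching decisions at different ENs are decoupled, and finally sum the per-EN regret.

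For a fixed EN $n$ and time $t$, I define the ``good event'' $G_{n,t}$ on which the confidence inequality of Lemma~\ref{estimationerror} holds simultaneously for every file $f\in\mathcal F$ with the choice $\delta = \sqrt{\ln(tF^{1/2})}$ built into $\alpha_t$. A union bound over the $F$ files yields $\Pr[G_{n,t}^{c}] \leq 2F\,e^{-2\delta^{2}} = 2/t^{2}$. On $G_{n,t}$ every $f$ satisfies the two-sided sandwich $\boldsymbol{x}_{f,n,t}^{\top}\boldsymbol{\theta}_{n}^{\ast} \leq \hat d_{f,n,t} \leq \boldsymbol{x}_{f,n,t}^{\top}\boldsymbol{\theta}_{n}^{\ast} + 2 p_{f,n,t}$. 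Since Algorithm~\ref{LinUCB} greedily selects the top-$c$ files by $\hat d_{f,n,t}$, chaining the two inequalities over the optimal set $\mathcal F_{n,t}^{\ast}$ and the chosen set $\mathcal F_{n,t}$ yields the one-step mean-regret bound $\sum_{f\in\mathcal F_{n,t}^{\ast}}\boldsymbol{x}^{\top}\boldsymbol{\theta}_{n}^{\ast} - \sum_{f\in\mathcal F_{n,t}}\boldsymbol{x}^{\top}\boldsymbol{\theta}_{n}^{\ast} \leq 2\sum_{f\in\mathcal F_{n,t}} p_{f,n,t}$.

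The hard step is cumulating these widths over time. The key tool is the standard elliptic-potential / self-normalized determinant lemma, which bounds $\sum_{s}\boldsymbol{x}_{s}^{\top}\boldsymbol{V}_{s}^{-1}\boldsymbol{x}_{s} \leq 2d\ln\!\bigl(1+T\eta^{2}/(d\lambda)\bigr)$ along any update sequence $\boldsymbol{V}_{s+1}=\boldsymbol{V}_{s}+\boldsymbol{x}_{s}\boldsymbol{x}_{s}^{\top}$ with $\|\boldsymbol{x}_{s}\|\leq\eta$. Combined with Cauchy--Schwarz over the cached (file,~time) pairs at EN $n$, together with the monotonicity $\alpha_{t}\leq\alpha_{T}=O(\sqrt{\ln T})$, this converts the per-round width bound above into the per-EN cumulative bound of the order claimed. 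Summing over the $N$ decoupled ENs and substituting $\alpha_{T}$ gives $R(T)=O\!\bigl(cN\sqrt{dT\ln T\,\ln(\lambda+T\eta^{2}/d)}\bigr)$ on the intersection of good events.

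Finally I would lift the high-probability statement to the expected regret defined in Eq.~(\ref{regret}) by absorbing the failure rounds: because the per-round reward is bounded, a violated good event costs only $O(c)$ regret, and $\sum_{t\geq 1}\Pr[G_{n,t}^{c}]=O(1)$ for each EN, so the failure contribution is $O(cN)$ and is dominated by the leading term. The main subtlety I anticipate is a bookkeeping one: Lemma~\ref{estimationerror} is stated with a per-file design matrix $\boldsymbol{V}_{f,n}$, whereas the pseudocode maintains a single $\boldsymbol{V}_{n}$ per EN into which the $c$ caching features of every round are folded, so applying the elliptic-potential lemma requires aligning these two views (essentially, recognizing that the per-file self-normalized sum is dominated by the per-EN one and that $\lambda\boldsymbol{I}_{d}$ is a common lower bound on both). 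Once this alignment is settled, the remainder of the calculation is routine bookkeeping.
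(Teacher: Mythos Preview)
Your proposal is correct and follows essentially the same route as the paper's own proof: split the regret into good- and bad-event contributions, bound the bad-event part by $O(cN)$ via $\sum_{t\geq 1} t^{-2}<\infty$ and a crude per-round cap, and on the good event chain the optimistic top-$c$ selection with the confidence sandwich to get the per-round bound $2\sum_{f\in\mathcal F_{n,t}} p_{f,n,t}$, then apply Cauchy--Schwarz together with the elliptic-potential lemma of \cite{Yasin} and $\alpha_t\leq\alpha_T$ to accumulate across $t$ and sum over the $N$ ENs. The bookkeeping concern you flag about the per-file $\boldsymbol{V}_{f,n}$ in Lemma~\ref{estimationerror} versus the per-EN $\boldsymbol{V}_n$ in the pseudocode is a genuine notational inconsistency that the paper's own proof simply glosses over, so your instinct to align them before invoking the determinant lemma is well placed.
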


\begin{proof}
The total regret depends on the algorithm's accuracy of estimation on content hit rate, which is elaborated in Lemma \ref{estimationerror}. According to this lemma, the true hit rate of file $f$ at EN $n$ lies in the confidence interval around the predicted hit rate
\begin{equation}
\mathcal I_{f,n,t} = [\boldsymbol{x}_{f,n,t}^\top \tilde{\boldsymbol{\theta}}_{n,t} - p_{f,n,t}, \; \boldsymbol{x}_{f,n,t}^\top \tilde{\boldsymbol{\theta}}_{n,t} + p_{f,n,t}]
\end{equation}
with high probability. 

Let $\mathcal X_{n,t} = \{\exists f \in\mathcal F: |d_{f,n,t} - \tilde d_{f,n,t}| \geq p_{f,n,t}\}$ be the event that there exists at least one file whose true hit rate lies outside its confidence interval. Let $\bar{\mathcal X}_{n,t}$ be the complementary event of $\mathcal X_{n,t}$, i.e., all files' true hit rates fall into their confidence interval. Let $r_{n,t}$ be the instant regret of a caching algorithm in EN $n$ at time slot $t$. According to Eq. (\ref{regret}), the total regret depends on the difference between the set of files chosen by the Algorithm and the optimum set, i.e., $\mathcal F_{n,t}$ and $\mathcal F_{n,t}^\ast$, thus
\begin{equation}\label{instantregret}
\begin{array}{rcl}
r_{n,t} &=& \sum_{f \in \mathcal F_{n,t}^\ast} d_{f,n,t} - \sum_{f \in \mathcal F_{n,t}} d_{f,n,t},
\end{array}
\end{equation}
and the long-term regret can be rewritten as
\begin{eqnarray}
R(T) &\!\!\!\!=\!\!\!\!& \sum_{t\in\mathcal T} \sum_{n \in \mathcal N} r_{n,t} \nonumber\\
&\!\!\!\!=\!\!\!\!& \sum_{t\in\mathcal T} \sum_{n \in \mathcal N} \mathds 1_{\{\mathcal X_{n,t}\}}r_{n,t} + \sum_{t\in\mathcal T} \sum_{n \in \mathcal N} \mathds 1_{\{\bar{\mathcal X}_{n,t}\}} r_{n,t},\label{dividedregret}
\end{eqnarray}
where $\mathds 1_{\{\mathcal X_{n,t}\}}$ is an indicator variable that equals to $1$ if event $\mathcal X_{n,t}$ happens and equals to $0$ otherwise. To bound the long-term regret, the two terms in Eq. (\ref{dividedregret}) are bounded respectively.

Firstly, consider the case when event $\mathcal X_{n,t}$ happens. With the setting of $\alpha_t$ in Eq. (\ref{perturbation}), for a file $f$ and EN $n$ at time $t$, we have $\mathbb P\{|d_{f,n,t} - \tilde d_{f,n,t}|\geq p_{f,n,t}\} \leq 2F^{-1}t^{-2}$. As a result, the frequency of event $\mathcal X_{n,t}$ happens in all ENs across the time span can be bounded as: 
\begin{eqnarray}
\sum_{t\in\mathcal T} \sum_{n \in \mathcal N} \mathds 1_{\{\mathcal X_{n,t}\}} &\!\!\!\leq\!\!\!& \sum_{t\in\mathcal T} \sum_{n \in \mathcal N} \sum_{f \in \mathcal F} \mathbb P\big\{|d_{f,n,t} - \tilde d_{f,n,t}|\geq p_{f,n,t}\big\} \nonumber\\
&\!\!\!\leq\!\!\!& \sum_{t\in\mathcal T} \sum_{n \in \mathcal N} \sum_{f \in \mathcal F} 2F^{-1}t^{-2} = 2N \sum_{t\in\mathcal T} t^{-2} \nonumber\\
&\!\!\!\leq\!\!\!& 2N \sum_{t=1}^\infty t^{-2} \leq \frac{\pi^2}{3}N.
\end{eqnarray}
Without loss of generality, let the content hit rate $d_{f,n,t}\leq \gamma,\;\forall f\in\mathcal F,\; n\in\mathcal N$ and $t\in\mathcal T$. According to Eq. (\ref{instantregret}), a coarse upper bound of the instant regret is $r_{n,t} \leq c \gamma$. Therefore, the first term of Eq. (\ref{dividedregret}) can be bound as 
\begin{equation}\label{firstbound}
\begin{array}{rcl}
\sum_{t\in\mathcal T} \sum_{n\in\mathcal N} \mathds 1_{\{\mathcal X_{n,t}\}} r_{n,t} &\leq& \pi^2 c\gamma N/3. \end{array}
\end{equation}

Then, consider the case when event $\bar{\mathcal X}_{n,t}$ happens, all files' true hit rates falls in to the confidence interval around their estimation $\tilde d_{f,n,t}$. Hence, $|d_{f,n,t} - \tilde d_{f,n,t}|\leq p_{f,n,t}, \; \forall f\in\mathcal F$. With $\hat d_{f,n,t} = \tilde d_{f,n,t} + p_{f,n,t}$, we have 
\begin{equation}\label{confidence}
0 \leq \hat d_{f,n,t} - d_{f,n,t} \leq 2p_{f,n,t}.
\end{equation} 
By Eq. (\ref{instantregret}) and (\ref{confidence}), when event $\bar{\mathcal X}_{n,t}$ happens, the instant regret $r_{n,t}$ can be bounded as
\begin{eqnarray}
r_{n,t}|_{\bar{\mathcal X}_{n,t}} &=& \sum_{f \in \mathcal F_{n,t}^\ast \setminus \mathcal F_{n,t}} d_{f,n,t} - \sum_{f \in \mathcal F_{n,t} \setminus \mathcal F_{n,t}^\ast} d_{f,n,t} \nonumber\\
&\leq& \sum_{f \in \mathcal F_{n,t}^\ast \setminus \mathcal F_{n,t}} \hat d_{f,n,t} - \sum_{f \in \mathcal F_{n,t} \setminus \mathcal F_{n,t}^\ast} d_{f,n,t} \nonumber\\
&\leq& \sum_{f \in \mathcal F_{n,t} \setminus \mathcal F_{n,t}^\ast} \Big(\hat d_{f,n,t} - d_{f,n,t}\Big) \label{setfunction}\\
&\leq& 2\sum_{f \in \mathcal F_{n,t} \setminus \mathcal F_{n,t}^\ast} p_{f,n,t}.
\end{eqnarray}
where inequality (\ref{setfunction}) is due to fact that since the algorithm selects files in $\mathcal F_{n,t} \setminus \mathcal F_{n,t}^\ast$ rather than $\mathcal F_{n,t}^\ast \setminus \mathcal F_{n,t}$, hence the collective upper confidence bound hit rate satisfies $\sum_{f \in \mathcal F_{n,t} \setminus \mathcal F_{n,t}^\ast} \hat d_{f,n,t} \geq \sum_{f \in \mathcal F_{n,t}^\ast \setminus \mathcal F_{n,t}} \hat d_{f,n,t} $. 
\begin{figure*}
\centering
\subfigure[]{\label{c=10}
\includegraphics[width=0.32\textwidth]{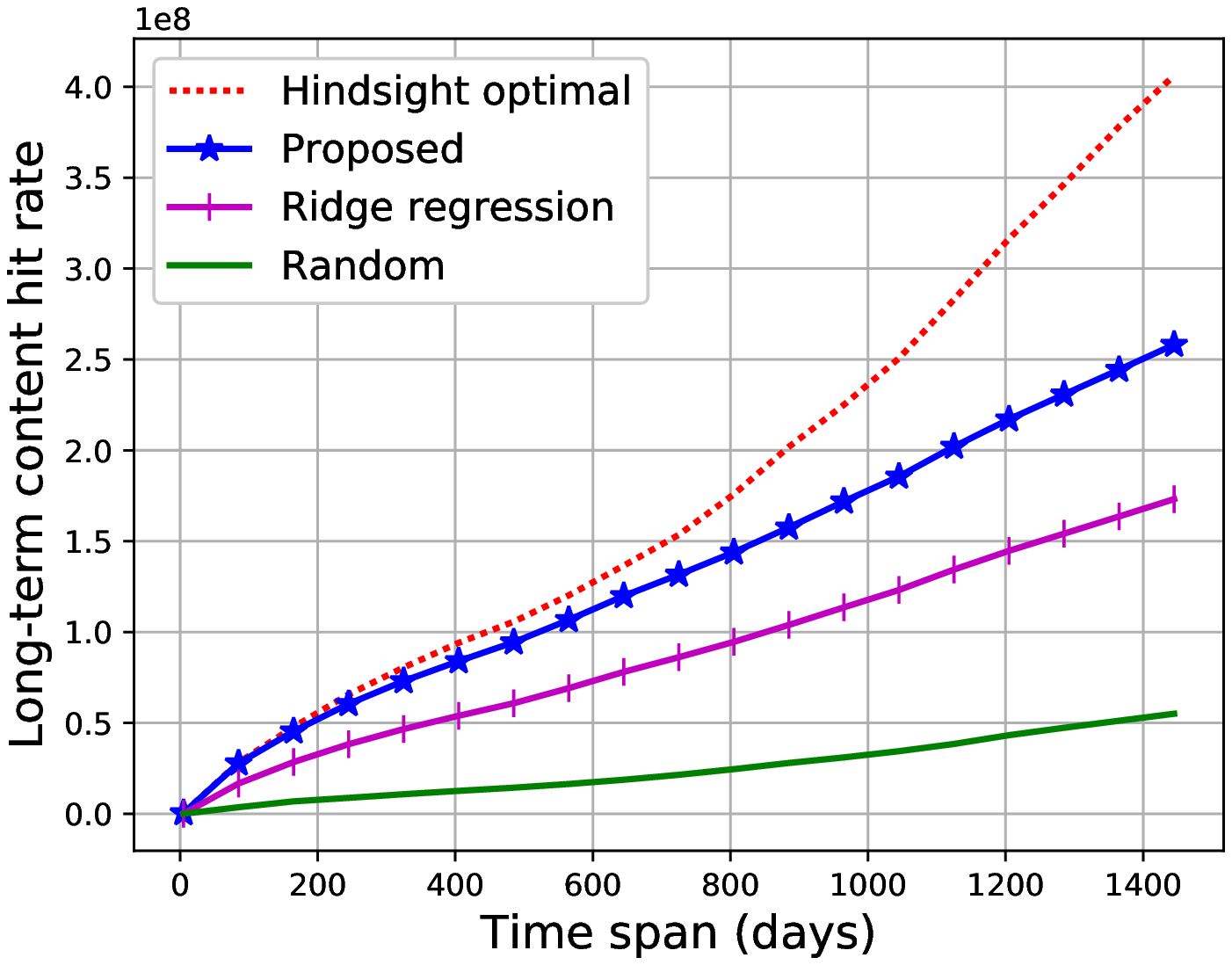}}
\subfigure[]{\label{c=40}
\includegraphics[width=0.32\textwidth]{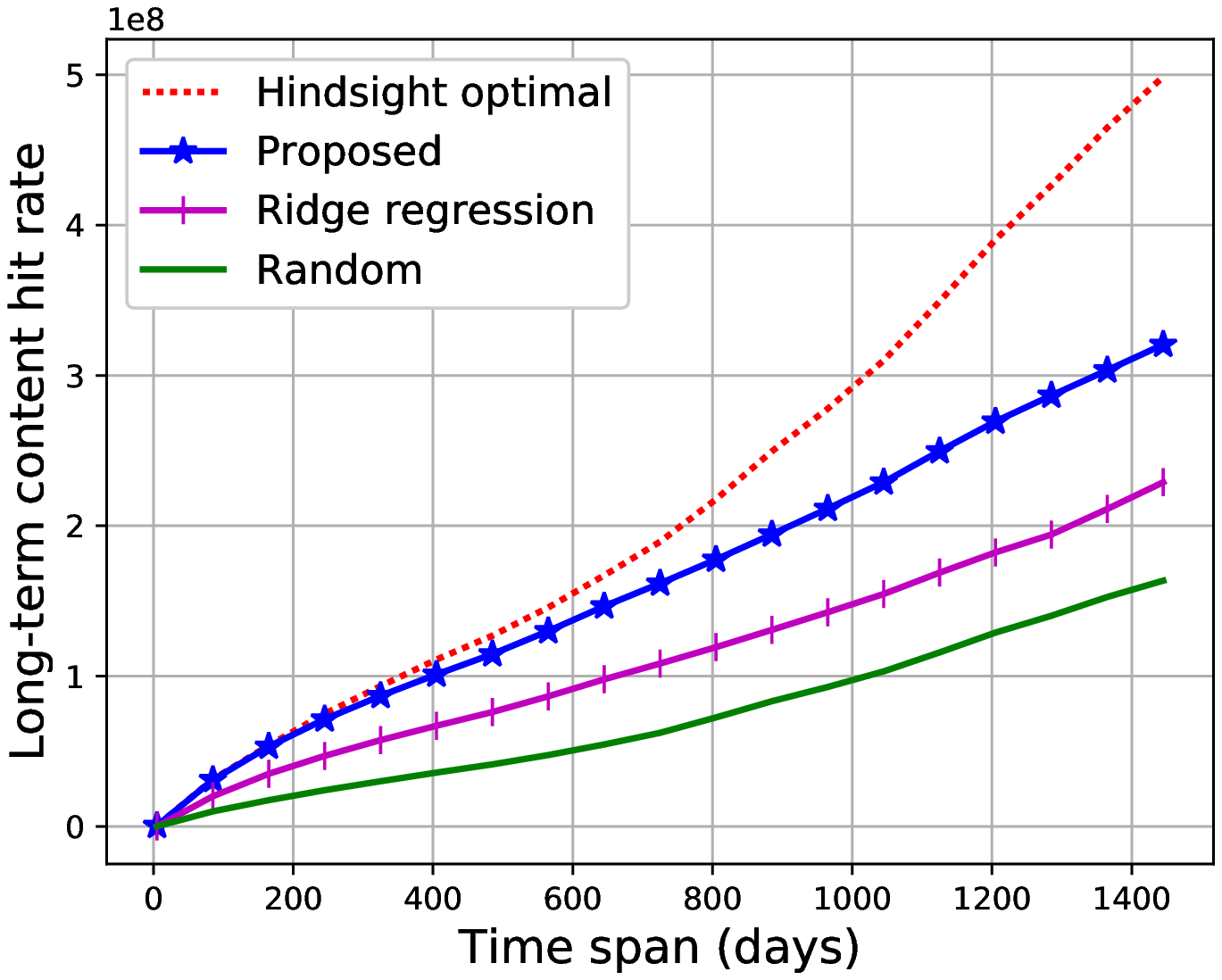}}
\subfigure[]{\label{c=80}
\includegraphics[width=0.32\textwidth]{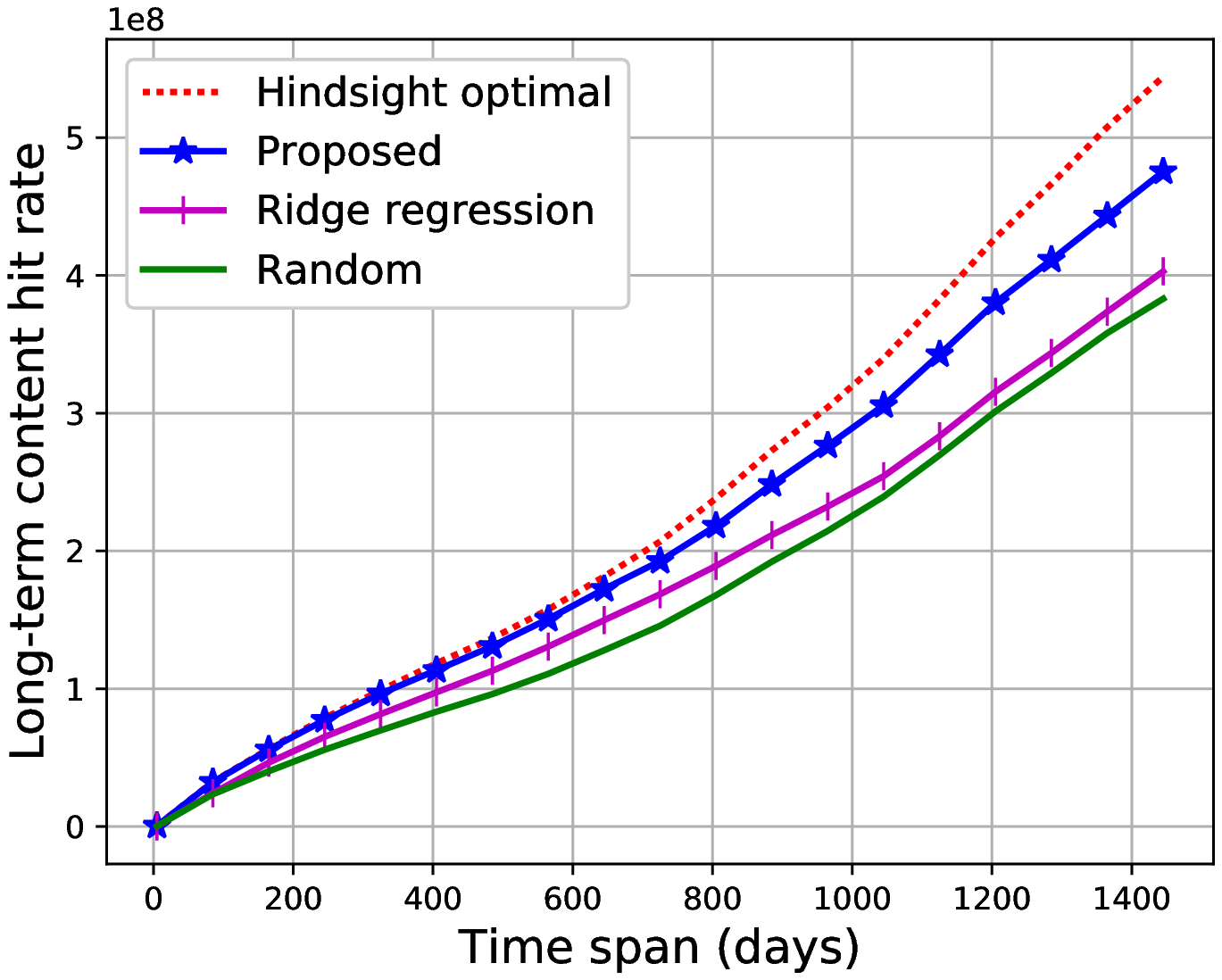}}
\caption{The content hit rate comparison between the proposed algorithm and other benchmarks with varying caching size, where the total number of videos is 100. (a) EN caching size  $c=10$, (b) EN caching size $c=30$ and c) EN caching size $c=70$.}
\label{simresult}
\end{figure*}

Based on two lemmas from \cite{Yasin} (Lemma 10 and 11), the second term in Eq. (\ref{dividedregret}) can be bounded as
\begin{eqnarray}
&&\sum_{t\in\mathcal T} \sum_{n\in\mathcal N}  r_{n,t}|_{\bar{\mathcal X}_{n,t}} \;\;\leq\;\;2\sum_{t\in\mathcal T} \sum_{n \in\mathcal N}\sum_{f \in \mathcal F_{n,t}\setminus \mathcal F_{n,t}^\ast} p_{f,n,t}\nonumber\\
&\leq& 2c\alpha_T \sum_{n\in\mathcal N} \sum_{t\in\mathcal T} \sqrt{\boldsymbol{x}_{f,n,t}^\top \boldsymbol{V}_{f,n}^{-1} \boldsymbol{x}_{f,n,t}}\label{midalpha}\\
&\leq& 2c\alpha_T \sum_{n\in\mathcal N} \sqrt{T\sum_{t\in\mathcal T} \boldsymbol{x}_{f,n,t}^\top \boldsymbol{V}_{f,n}^{-1} \boldsymbol{x}_{f,n,t}}\label{midroot}\\
&\leq& 2c\alpha_TN \sqrt{2T \ln\frac{(\lambda + T\eta^2/d)^d}{\lambda}},\label{midlemmas}
\end{eqnarray}
where Eq. (\ref{midalpha}) is due the fact that $\alpha_t$ increases with $t$, Eq. (\ref{midroot}) holds because the arithmetic mean of a set of values is smaller than their root-mean square and Eq. (\ref{midlemmas}) is based on the lemmas from \cite{Yasin}. By substituting Eq. (\ref{midlemmas}) and (\ref{firstbound}) into Eq. (\ref{dividedregret}), and together with $\alpha_T = \sqrt{\ln (TF^{\frac{1}{2}})} + \zeta\lambda$, we have 
\begin{eqnarray}
R(T) &\leq& 2c\alpha_TN\sqrt{2T\ln\frac{(\lambda + T \eta^2/d)^d}{\lambda}} + \frac{\pi^2}{3}c\gamma N \nonumber\\
& = & O\Big(cN\sqrt{dT (\ln T) \ln(\lambda + T\eta^2/d)}\Big),
\end{eqnarray}
which concludes the proof.
\end{proof}
Basically, the regret consists of two parts: estimation error and perturbation. In particular, the estimation error consists of the ridge regression model error and the intended bias incur with $\lambda>0$. The perturbation term is well managed by the varying control parameter $\alpha_t$. Theorem \ref{theorem} indicates that the proposed algorithm achieves sublinear long-term regret, i.e., $\frac{R(T)}{T}\to 0$ when $T\to\infty$, which means that the accumulative content hit rate asymptotically approaches the optimal caching scheme in the long term.

\section{Performance Evaluation}\label{simulation}
To evaluate the proposed algorithm, we conduct a study on a dataset crawled from YouTube. On YouTube, some video owners made their video view statistics open to public. Among others, the view amount information is recorded on a daily basis. We randomly crawled $100$ videos that were uploaded before January of 2013, with full view history till January of 2017. The most popular video has been watched tens of thousands of times everyday since it was uploaded, while the least popular one has been rarely viewed across the time span.

\subsection{Simulation Setup}
Note that YouTube videos are being watched globally, and we do not have access to the statistics of videos at different locations. To emulate the video view process in different places, we shift the statistics of each video backward and forward on the time span. In this way, we are able to characterize the location-related features based on different view statistics and meanwhile, maintain the temporal feature of each video record. Specifically, we consider the content library containing those $100$ videos. Each video can be cached on $3$ ENs, each with caching size $c$. Content refreshing is performed upon the network status. For example, network traffic presents regular peak and valley every day. Hence, content refreshing can be performed during the off-peak period with minimized impact on the normal network activity. Meanwhile, we use the view amount in the past $5$ days as the feature vector, i.e., $d = 5$.

We compare the proposed algorithm with the following benchmarks. 1) Hindsight optimal. Based on the full view record across the time span, the most popular videos are always selected and cached. Note that this benchmark requires future information and cannot be implemented in practice. 2) Ridge regression. As a degraded version of our proposed algorithm, the ridge regression does not account for the random noise of user demand. 3) Random. A random set of videos is selected to update the EN cache at each time slot.

\subsection{Simulation Result}

Figure \ref{simresult} shows the results of different algorithms in terms of long-term content hit rate with different EN caching sizes. It can be seen that the proposed algorithm outperforms other schemes under all caching schemes. This is because our algorithm chooses to be optimism in face of uncertainty, which helps to better identify the popular contents even under varying popularity profile. As it was found in \cite{imc}, YouTube video requests are highly skewed, indicating that a small portion of popular contents are attracting the majority of requests. This is confirmed by Fig. \ref{simresult} since the content hit rate does not grow linearly with the cache size. The performance gain of our algorithm can be higher if content popularity profile is less skewed, since contents with highly skewed popular profile can also be easily identified by other algorithms. Also note that, hit rate of the optimum stays almost the same from $c=30$ to $c=70$. This is due to the fact that content popularity is long-tailed \cite{imc}, namely, the less popular contents attract almost vanishing requests compared to the popular ones.

\section{Conclusion}\label{conclusion}
This paper proposes an online learning algorithm for dynamic mobile edge caching, by exploiting location related features. The algorithm first estimates the hit rate of a content at a specific location based on a linear model. Noticing that the accuracy may be affected by random noises, a perturbation is added to the estimation to account for uncertainty. Then, according to estimation results, contents that are predicted to maximize hit rate at a certain location are cached respectively. Theoretical analysis indicates that the proposed algorithm achieves sublinear long-term regret when compared to the optimal caching policy. Simulations on real world traces demonstrate the advantage of the proposed algorithm. For future work, we will investigate the impact of location differentiation on coded caching schemes.

\appendix[Proof of Lemma \ref{estimationerror}]
Let $\boldsymbol{h}_{f,n} = \boldsymbol{\Phi}_{f,n}^\top \boldsymbol{y}_{f,n}$, based on Eq. (\ref{ridge}), the estimation error can be rewritten as
\begin{eqnarray}
&&|\boldsymbol{x}_{f,n}^\top \tilde{\boldsymbol{\theta}}_n - \boldsymbol{x}_{f,n}^\top \boldsymbol{\theta}_n^\ast| \nonumber\\
&=& |\boldsymbol{x}_{f,n}^\top \boldsymbol{V}_{f,n}^{-1} \boldsymbol{h}_{f,n} - \boldsymbol{x}_{f,n}^\top \boldsymbol{V}_{f,n}^{-1} (\boldsymbol{\Phi}_{f,n}^\top \boldsymbol{\Phi}_{f,n} + \lambda \boldsymbol{I}_d) \boldsymbol{\theta}_n^\ast| \nonumber \\
&=& |\boldsymbol{x}_{f,n}^\top \boldsymbol{V}_{f,n}^{-1} \boldsymbol{\Phi}_{f,n}^\top (\boldsymbol{y}_{f,n} - \boldsymbol{\Phi}_{f,n}\boldsymbol{\theta}_n^\ast) - \lambda \boldsymbol{x}_{f,n}^\top \boldsymbol{V}_{f,n}^{-1} \boldsymbol{\theta}_n^\ast|. \nonumber 
\end{eqnarray}
Since $||\boldsymbol{\theta}_n^\ast|| \leq \zeta$, according to H\"older's inequality,
\begin{eqnarray}
|\boldsymbol{x}_{f,n}^\top \tilde{\boldsymbol{\theta}}_n - \boldsymbol{x}_{f,n}^\top \boldsymbol{\theta}_n^\ast| & \!\! \leq \!\! & |\boldsymbol{x}_{f,n}^\top \boldsymbol{V}_{f,n}^{-1} \boldsymbol{\Phi}_{f,n}^\top (\boldsymbol{y}_{f,n} - \boldsymbol{\Phi}_{f,n}\boldsymbol{\theta}_n^\ast)|\nonumber\\
&& + \;\; \zeta\lambda ||\boldsymbol{x}_{f,n}^\top \boldsymbol{V}_{f,n}^{-1}|| \label{twoerror}. 
\end{eqnarray}
The right-hand side of above inequality decomposes the estimation error into two parts, with the first (variance term) specifies the error caused by linear model, and the second (bias term) is the bias incurred by ridge regression parameter $\lambda$. According to Eq. (\ref{linearprediction}), we have $\mathbb E [\boldsymbol{y}_{f,n} - \boldsymbol{\Phi}_{f,n}\boldsymbol{\theta}_n^\ast] = 0$. The Azuma's inequality gives an probabilistic upper bound of the variance term of Eq. (\ref{twoerror}):
\begin{flalign}
&\mathbb{P}\bigg\{|\boldsymbol{x}_{f,n}^\top \boldsymbol{V}_{f,n}^{-1} \boldsymbol{\Phi}_{f,n}^\top (\boldsymbol{y}_{f,n} - \boldsymbol{\Phi}_{f,n}\boldsymbol{\theta}_n^\ast)| > \delta \sqrt{\boldsymbol{x}_{f,n}^\top \boldsymbol{V}_{f,n}^{-1} \boldsymbol{x}_{f,n}}\bigg\}&& \nonumber \\
&\leq  2\exp\Big(-\frac{2\delta^2 \boldsymbol{x}_{f,n}^\top \boldsymbol{V}_{f,n}^{-1} \boldsymbol{x}_{f,n}} {||\boldsymbol{x}_{f,n}^\top \boldsymbol{V}_{f,n}^{-1} \boldsymbol{\Phi}_{f,n}^\top||^2}\Big) \leq 2e^{-2\delta^2},&& \label{error1}
\end{flalign}
where the last inequality is due to the fact that 
\begin{eqnarray}
\boldsymbol{x}_{f,n}^\top \boldsymbol{V}_{f,n}^{-1} \boldsymbol{x}_{f,n} &=& \boldsymbol{x}_{f,n}^\top \boldsymbol{V}_{f,n}^{-1} (\boldsymbol{\Phi}_{f,n}^\top \boldsymbol{\Phi}_{f,n} + \lambda \boldsymbol{I}_d) \boldsymbol{V}_{f,n}^{-1} \boldsymbol{x}_{f,n} \nonumber \\
& \geq & \boldsymbol{x}_{f,n}^\top \boldsymbol{V}_{f,n}^{-1} \boldsymbol{\Phi}_{f,n}^\top \boldsymbol{\Phi}_{f,n} \boldsymbol{V}_{f,n}^{-1} \boldsymbol{x}_{f,n} \nonumber \\
& = & ||\boldsymbol{x}_{f,n}^\top \boldsymbol{V}_{f,n}^{-1} \boldsymbol{\Phi}_{f,n}^\top||^2.
\end{eqnarray}
Hence, the variance term of Eq. (\ref{twoerror}) can be bounded by $\delta \sqrt{\boldsymbol{x}_{f,n}^\top \boldsymbol{V}_{f,n}^{-1} \boldsymbol{x}_{f,n}}$ with probability at least $1-2e^{-2\delta^2}$. Further, The bias term of Eq. (\ref{twoerror}) can be bounded as 
\begin{eqnarray}
||\boldsymbol{x}_{f,n}^\top \boldsymbol{V}_{f,n}^{-1}|| & = & \sqrt{\boldsymbol{x}_{f,n}^\top \boldsymbol{V}_{f,n}^{-1} \boldsymbol{I}_d \boldsymbol{V}_{f,n}^{-1} \boldsymbol{x}_{f,n}} \nonumber \\
& \leq & \sqrt{\boldsymbol{x}_{f,n}^\top \boldsymbol{V}_{f,n}^{-1} (\lambda\boldsymbol{I}_d+ \boldsymbol{\Phi}_{f,n}^\top \boldsymbol{\Phi}_{f,n}) \boldsymbol{V}_{f,n}^{-1} \boldsymbol{x}_{f,n}} \nonumber \\ 
& = & \sqrt{\boldsymbol{x}_{f,n}^\top \boldsymbol{V}_{f,n}^{-1} \boldsymbol{x}_{f,n}}. \label{error2}
\end{eqnarray}
By substituting Eq. (\ref{error1}) and (\ref{error2}) into Eq. (\ref{twoerror}), the probabilistic bound in Eq. (\ref{errorbound}) directly follows.

\section*{Acknowledgment}
This work is supported by National Natural Science Foundation of China under Grant No. 61231010, Research Fund for the Doctoral Program of MOE of China under Grant No. 20120142110015 and the Natural Sciences and Engineering Research Council (NSERC) of Canada. Peng Yang is also financially supported by the China Scholarship Council.

\end{document}